\newtheorem{problem}{Problem}
\newtheorem{theorem}{Theorem}
\newtheorem{lemma}[theorem]{Lemma}
\newtheorem{corollary}[theorem]{Corollary}
\theoremstyle{remark}
\newtheorem{example}{Example}
\newcommand{\sa}{synchronizing automata}
\newcommand{\san}{synchronizing automaton}
\newcommand{\sw}{reset word}
\newcommand{\sws}{reset words}
\DeclareSymbolFont{rsfscript}{OMS}{rsfs}{m}{n}
\DeclareSymbolFontAlphabet{\mathrsfs}{rsfscript}
\newcommand{\bigand}{\operatornamewithlimits{\hbox{\Large$\&$}}}
\begin{document}
\title{P(l)aying for Synchronization\thanks{Supported by the Russian Foundation for Basic Research, grant 10-01-00793,
and by the Presidential Program for young researchers, grant MK-266.2012.1.}}

\author{F. M. Fominykh \and P. V. Martyugin \and M. V. Volkov}

\date{Institute of Mathematics and Computer Science\\ Ural Federal University, 620000 Ekaterinburg, Russia}

\maketitle

\begin{abstract}
Two topics are presented: synchronization games and synchronization costs. In a synchronization game on a deterministic finite automaton,
there are two players, Alice and Bob, whose moves alternate. Alice wants to synchronize the given automaton, while Bob aims to make her
task as hard as possible. We answer a few natural questions related to such games. Speaking about synchronization costs, we consider
deterministic automata in which each transition has a certain price. The problem is whether or not a given automaton can be synchronized
within a given budget. We determine the complexity of this problem.
\end{abstract}

\section{Introduction and overview}
\label{intro}

A complete deterministic finite automaton (DFA) $\mathrsfs{A}=(Q,\Sigma)$ (here and below $Q$ stands for the state set and $\Sigma$ for the
input alphabet) is called \emph{synchronizing} if there exists a word $w\in\Sigma^*$ whose action brings $\mathrsfs{A}$ to one particular
state no matter at which state $w$ is applied: $q{\cdot}w=q'{\cdot}w$ for all $q,q'\in Q$. Any word $w$ with this property is said to be a
\emph{reset} word for the automaton.

Synchronizing automata serve as transparent and natural models of error-resistant systems in many applications (coding theory, robotics,
testing of reactive systems) and reveal interesting connections with symbolic dynamics, substitution systems and other parts of
mathematics. The literature on \sa\ and their applications is rapidly growing so that even the most recent
surveys~\cite{Sandberg:2005,Volkov:2008} are becoming obsolete. A majority of research in the area focuses on the so-called \v{C}ern\'{y}
conjecture but the theory of \sa\ also offers many other interesting questions. In the present paper we introduce two new directions of the
theory and obtain some initial results in these directions.

Section~\ref{play} concerns with synchronization games on DFAs. In such a game on a~DFA $\mathrsfs{A}$, there are two players, Alice
(Synchronizer) and Bob (Desynchronizer), whose moves alternate. Alice who plays first wants to synchronize $\mathrsfs{A}$, while Bob aims
to prevent synchronization or, if synchronization is unavoidable, to delay it as long as possible. Provided that both players play
optimally, the outcome of such a game depends only on the underlying automaton so studying synchronization games may be considered as a way
to study \sa. The most natural questions here are the following. Given a DFA $\mathrsfs{A}$, how to decide who wins in the synchronization
game on $\mathrsfs{A}$? If Alice wins, how many moves may she need in the worst case, in particular, is there a polynomial of~$n$ that
bounds from above the number of moves in any game on a DFA with $n$ states for which Alice has a winning strategy? How difficult is it to
predict whether or not Alice can win after a certain number of moves? It turns out that these questions can be answered by applying more or
less standard techniques. This may be a bit disappointing but as a byproduct, we reveal a somewhat unexpected relation between
synchronization games and a version of the \v{C}ern\'{y} conjecture.

In Section~\ref{pay} we consider weighted automata. A \emph{deterministic weighted automaton} (DWA) is a DFA $\mathrsfs{A}=(Q,\Sigma)$
endowed with a function $\gamma:Q\times\Sigma\to\mathbb{Z}_+$ where $\mathbb{Z}_+$ stands for the set of all positive integers. In other
words, each transition of a DWA has a certain price being a positive integer. Then every computation performed by $\mathrsfs{A}$ also gets
a certain cost, namely, the sum of the costs of the transitions involved. If a DWA happens to be synchronizing and $w\in\Sigma^*$ is its
\sw, then one can assign to $w$ a cost measured, say, by the maximum among all costs of applying the word $w$ at a state in $Q$. While in
the non-weighted case one is usually interested in minimizing synchronization time, that is, the length of \sws, in the weighted case it is
quite natural to minimize synchronization costs. A basic problem here is to determine, whether or not a given DWA can be synchronized
within a given budget $B\in\mathbb{Z}_+$, in other words, whether or not $\mathrsfs{A}$ admits a \sw\ whose cost does not exceed $B$. We
demonstrate that this problem is PSPACE-complete.

Besides initial questions discussed in this paper, each of the two outlined research directions leads to several intriguing open problems.
We present and  briefly discuss two such problems in Section~\ref{problems}.

The paper has grown from the extended abstract \cite{Fominykh&Volkov:2012} by the first and the third authors. The second author has solved
one of the problems left open in~\cite{Fominykh&Volkov:2012} and his solution has been incorporated in the present paper.

\section{Playing for synchronization}
\label{play}

The idea to consider synchronization as a game has independently arisen in~\cite{Ananichev&Volkov&Zaks:2007}
and~\cite{Blass&Gurevich&Nachmanson&Veanes:2006}. In~\cite{Ananichev&Volkov&Zaks:2007} a one-player game has been used to prove a lower
bound on the minimum length of \sws\ for a certain series of `slowly' \sa. In~\cite{Blass&Gurevich&Nachmanson&Veanes:2006} a specific
synchronization process arising in software testing has been analyzed in terms of a two-player game. The game that we consider here
basically follows the model of~\cite{Ananichev&Volkov&Zaks:2007} but is a two-player game as
in~\cite{Blass&Gurevich&Nachmanson&Veanes:2006}. A further game-theoretic setting related to synchronization has been recently suggested
in~\cite{Jungers:2012}.

Now we describe the rules of our synchronization game. It is played by two players, Alice and Bob say, on an arbitrary but fixed DFA
$\mathrsfs{A}=(Q,\Sigma)$. In the initial position each state in $Q$ holds a coin but, as the game progresses, some coins may be removed.
The game is won by Alice when all but one coins are removed. Bob wins if he can keep at least two coins unremoved indefinitely long.

Alice moves first, then players alternate moves. The player whose turn it is to move proceeds by selecting a letter $a\in\Sigma$. Then, for
each state $q\in Q$ that held a coin before the move, the coin advances to the state $q{\cdot}a$. (In the standard graphical representation
of $\mathrsfs{A}$ as the labelled digraph with $Q$ as the vertex set and the labelled edges of the form $q\xrightarrow{a}q{\cdot}a$, one
can visualize the move as follows: all coins simultaneously slide along the edges labelled $a$.) If after this several coins happen to
arrive at the same state, all of them but one are removed so that when the move is completed, each state holds at most one coin.

\begin{figure}[tb]
\begin{center}
\unitlength=.78mm
\begin{picture}(131,90)(5,-50)\nullfont
\node[Nframe=n,Nfill=n](b0)(43.0,30.0){}
\node[Nframe=n,Nfill=n](b1)(33.0,0.0){}
\drawedge[ELside=r,linewidth=.7,AHdist=2.41,AHangle=20,AHLength=2.5,AHlength=2.41](b0,b1){Move $b$}
\node[Nframe=n,Nfill=n](b0)(103.0,30.0){}
\node[Nframe=n,Nfill=n](b1)(113.0,0.0){}
\drawedge[linewidth=.7,AHdist=2.41,AHangle=20, AHLength=2.5,AHlength=2.41](b0,b1){Move $a$}
\drawcircle[fillgray=0.4](71.0,50.0,3)
\node(n10)(71.0,50.0){}
\drawcircle(47.0,38.0,3)
\node(n11)(47.0,38.0){}
\drawcircle[fillgray=0.9](95.0,38.0,3)
\node(n12)(95.0,38.0){}
\drawcircle[fillgray=0](59.0,18.0,3)
\node(n13)(59.0,18.0){}
\node(n14)(87.0,18.0){}
\drawedge[ELdist=1.77](n10,n12){$a,b$}
\drawedge[ELdist=2.33](n12,n14){$b$}
\drawedge[ELdist=1.59](n14,n13){$b$}
\drawedge[ELdist=2.24](n13,n11){$b$}
\drawedge[ELdist=1.83](n11,n10){$b$}
\drawloop[loopangle=-215.88](n11){$a$}
\drawloop[loopangle=215.88](n13){$a$}
\drawloop[ELdist=1.71,loopangle=-33.69](n14){$a$}
\drawloop[ELdist=2.08,loopangle=37.3](n12){$a$}
\drawcircle(30.0,-8.0,3)
\node(n0)(30.0,-8.0){}
\drawcircle[fillgray=0](6.0,-20.0,3)
\node(n1)(6.0,-20.0){}
\drawcircle[fillgray=0.4](54.0,-20.0,3)
\node(n2)(54.0,-20.0){}
\node(n3)(18.0,-40.0){}
\drawcircle[fillgray=0.9](46.0,-40.0,3)
\node(n4)(46.0,-40.0){}
\drawedge[ELdist=1.77](n0,n2){$a,b$}
\drawedge[ELdist=2.33](n2,n4){$b$}
\drawedge[ELdist=1.59](n4,n3){$b$}
\drawedge[ELdist=2.24](n3,n1){$b$}
\drawedge[ELdist=1.83](n1,n0){$b$}
\drawloop[loopangle=-215.88](n1){$a$}
\drawloop[loopangle=215.88](n3){$a$}
\drawloop[ELdist=1.71,loopangle=-33.69](n4){$a$} \drawloop[ELdist=2.08,loopangle=37.3](n2){$a$}
\node(n5)(112,-8.0){}
\drawcircle(88.0,-20.0,3)
\node(n6)(88,-20.0){}
\drawcircle[fillgray=0.9](138.0,-20.0,3)
\node(n7)(138,-20.0){}
\drawcircle[fillgray=0](100.0,-40.0,3)
\node(n8)(100,-40.0){} \node(n9)(128,-40.0){}
\drawedge[ELdist=2.31](n5,n7){$a,b$}
\drawedge[ELdist=2.21](n7,n9){$b$}
\drawedge[ELdist=1.85](n9,n8){$b$}
\drawedge[ELdist=1.86](n8,n6){$b$}
\drawedge[ELdist=1.83](n6,n5){$b$}
\drawloop[loopangle=-220.86](n6){$a$}
\drawloop[loopangle=220.86](n8){$a$}
\drawloop[ELdist=1.55,loopangle=-35.18](n9){$a$}
\drawloop[ELdist=1.84,loopangle=26.57](n7){$a$}
\end{picture}
\caption{Moves in a synchronization game} \label{moves}
\end{center}
\end{figure}
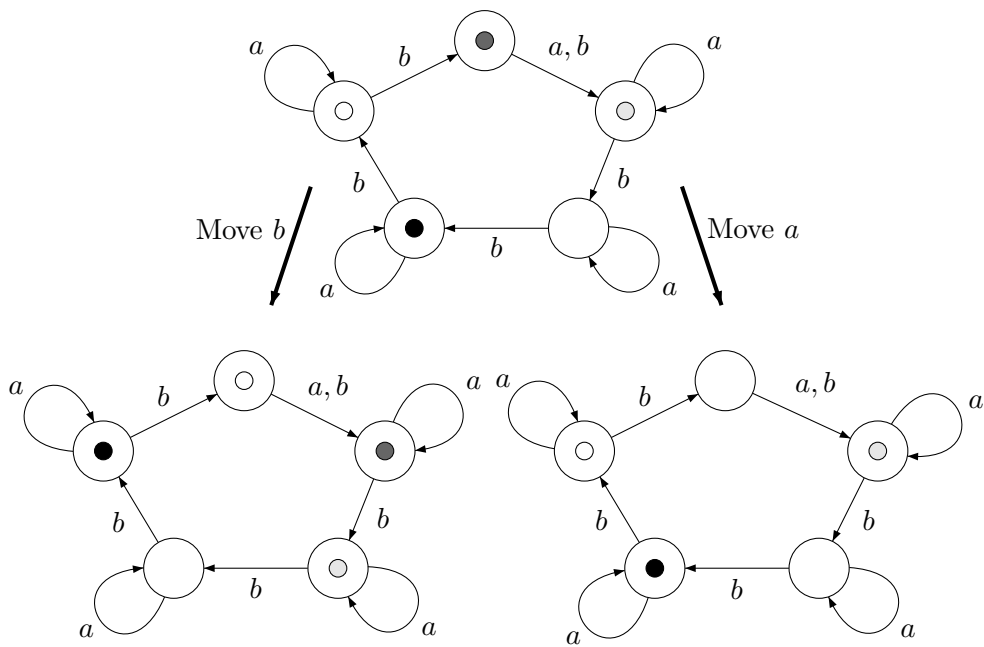

Fig.~\ref{moves} illustrates the rules. Its upper part shows a typical position in a game on a 5-state automaton with 2 input letters $a$
and $b$. The left lower part shows the effect of the move $b$ while the right lower part demonstrates the result of the move $a$. Observe
that in the latter case the dark-gray coin has been removed because it and the light-gray coin had arrived at the same state.

Let $a_1,a_2,\dots,a_k$ with $a_i\in\Sigma$ be a sequence of moves in the synchronization game on $\mathrsfs{A}=(Q,\Sigma)$ and let
$w=a_1a_2\cdots a_k$. It is easy to see that the set of states holding coins after this sequence of moves coincides with the image of $Q$
under the action of the word $w$. Thus, sequences of moves that lead to Alice's win correspond precisely to \sws\ for $\mathrsfs{A}$.
Therefore Bob wins on each DFA which is not synchronizing. Can he win on a \san? Yes, he can: for instance, we show that Bob wins on
automata in the famous \v{C}ern\'{y} series.

\v{C}ern\'{y}~\cite{Cerny:1964} found for each $n>1$ a \san\ $\mathrsfs{C}_n$ with $n$ states and 2 input letters whose shortest reset word
has length $(n-1)^2$. The states of $\mathrsfs{C}_n$ are the residues modulo $n$ and the input letters $a$ and $b$ act as follows:
$$m{\cdot}a=
 \begin{cases}
  1 & \text{for $m = 0$}, \\
  m & \text{for $1\le m<n$};
  \end{cases}
\qquad m{\cdot}b=m+1\!\!\pmod{n}.$$ The automaton is shown in Fig.~\ref{fig:cerny-n}.
\begin{figure}[th]
\begin{center}
\unitlength .8mm
\begin{picture}(110,35)(0,-10)
\node(A)(15,-5){\small$n{-}2$} \node(B)(30,15){\small$n{-}1$} \node(C)(55,22){\small{0}} \node(D)(80,15){\small{1}}
\node(E)(95,-5){\small{2}} \drawloop[loopangle=150](A){$a$} \drawloop[ELpos=40,loopangle=120](B){$a$}
\drawloop[ELpos=60,loopangle=60](D){$a$} \drawloop[loopangle=30](E){$a$} \drawedge(A,B){$b$} \drawedge(B,C){$b$}
\drawedge[curvedepth=3](C,D){$a$} \drawedge[curvedepth=-3,ELside=r](C,D){$b$} \drawedge(D,E){$b$} \put(13,-14){\dots} \put(92,-14){\dots}
\end{picture}
 \caption{The \v{C}ern\'{y} automaton $\mathrsfs{C}_n$}  \label{fig:cerny-n}
\end{center}
\end{figure}

\begin{example}
\label{ex:cerny-n} For each $n>3$, Bob has a winning strategy in the synchronization game on $\mathrsfs{C}_n$.
\end{example}

\begin{proof}
Observe that in $\mathrsfs{C}_n$, the only state where two coins can meet is the state $1$; moreover, this can happen only provided the
move $a$ has been played and before the move the states $0$ and $1$ both held a coin. We may assume for certainty that in this situation it
is the coin arriving from $0$ that is removed after the move.

Under this convention, the winning strategy for Bob is as follows. Bob only has to trace the coins that cover the states $n-1$ and $1$ in
the initial position. For his moves he must always select the letter $a$ except two cases: when the chosen coins cover either $n-2$ and $0$
or $0$ and $2$ in which cases Bob must select~$b$. This way Bob can always keep the coins two steps apart from each other thus preventing
them of being removed.
\end{proof}

On the other hand, it is easy to find DFAs on which Alice has a winning strategy. For instance, a DFA $\mathrsfs{A}$ is called
\emph{definite} in~\cite{Perles&Rabin&Shamir:1963} if there exists an $n>0$ such that every input word of length at least $n$ is a \sw\ for
$\mathrsfs{A}$. Clearly, on each definite automaton, Alice always wins by selecting her moves at random.

The rules of our game readily guarantee that, given a DFA $\mathrsfs{A}=(Q,\Sigma)$, one of the players must have a winning strategy in the
synchronization game on $\mathrsfs{A}$. If Alice has a winning strategy, consider a shortest winning sequence of her moves. Then it is
clear that each move in this sequence creates a position that could not have appeared after an earlier move. However, the number of
possible positions of the game does not exceed $2^{|Q|}-1$ since each position is specified by the subset of states that currently hold
coins. Therefore, if Alice has a winning strategy, she should be able to win after less than $2^{|Q|}$ moves. Thus, one can decide which
player has a winning strategy in the game on $\mathrsfs{A}$ by an exhaustive search through all $|\Sigma|^{2^{|Q|+1}}$ words of length
$2^{|Q|+1}$ over $\Sigma$ in which letters in the odd positions are Alice's moves and ones in the even positions are Bob's replies. Of
course, this brute force procedure is extremely inefficient and it is natural to ask whether an efficient---say, polynomial in the number
of states---algorithm exists. A positive answer can be deduced from the next observation.

\begin{lemma}
\label{localization} Alice has a winning strategy in the synchronization game on a DFA if and only if she has a winning strategy in every
position in which only two states of the DFA hold coins.
\end{lemma}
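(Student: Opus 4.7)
The key underlying observation is that after any sequence of letters $w\in\Sigma^*$ has been played, the position of the game started from a subset $S\subseteq Q$ is exactly $S\cdot w$; in particular, for $T\subseteq S$ one always has $T\cdot w\subseteq S\cdot w$, so collapsing a larger set automatically collapses every subset. Both directions of the equivalence will follow from this remark.

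The forward implication is the easier one. Suppose Alice has a winning strategy $\sigma$ starting from the initial position $Q$, and fix any pair $\{p,q\}\subseteq Q$. To win from the $2$-coin position $\{p,q\}$ Alice simulates a virtual copy of the game started from $Q$: on each of her turns she plays the letter prescribed by $\sigma$ in the virtual copy, and whenever Bob plays a letter in the real $2$-coin game she feeds the same letter to the virtual copy as though Bob had played it there. Since $\sigma$ defeats every Bob strategy (including the one arising from this simulation), after finitely many moves the virtual position $Q\cdot w$ becomes a singleton; hence $\{p,q\}\cdot w\subseteq Q\cdot w$ is a singleton too and Alice has won the real game.

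For the converse I would argue by induction on $|S|$ that Alice wins from every subset $S\subseteq Q$ when it is her turn to move. The cases $|S|=1$ (already won) and $|S|=2$ (the hypothesis) are immediate. For $|S|\geq 3$, Alice selects any pair $\{p,q\}\subseteq S$ and plays out her winning strategy for the $2$-coin game on $\{p,q\}$ inside the real game, treating Bob's actual moves as opponent moves in an imaginary $2$-coin sub-game. Since that strategy is winning, after finitely many moves the coins originating from $p$ and $q$ are brought together, so the resulting real position $S'$ satisfies $|S'|\leq |S|-1$, and Alice continues recursively from $S'$.

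The one subtlety is a turn-parity issue: the $2$-coin sub-game may end on a move of either player, so it need not be Alice's turn at the moment the pair collapses. If it is, the induction hypothesis applies to $S'$ directly; if instead it is Bob's turn, Alice lets him play one letter, obtaining a position $S''$ with $|S''|\leq|S'|\leq|S|-1$ and Alice to move, and then applies the induction hypothesis to $S''$. In either case Alice wins, completing the induction.
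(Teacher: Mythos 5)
Your proof is correct and follows essentially the same route as the paper's: the converse direction (pick a pair, run the two-coin strategy inside the big game, lose at least one coin per round) is identical, and your forward direction is just the contrapositive-free version of the paper's argument --- you restrict Alice's full-game strategy to a pair via the monotonicity $T\subseteq S\Rightarrow T\cdot w\subseteq S\cdot w$, where the paper instead lifts Bob's two-coin strategy to the full game. Your explicit handling of the turn-parity issue at the end of each round is a detail the paper passes over silently, and you resolve it correctly.
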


\begin{proof}
If Alice has no winning strategy for a position $P$ with two coins, $C$ and $C'$ say, then Bob has a winning strategy for $P$. If Bob plays
in the initial position according to this strategy, that is, selects his moves only on the basis of the location of the coins $C$ and $C'$,
as if there were no other coins, the two coins persist forever so that Alice loses the game. (Here we assume that whenever one of the coins
$C$ and $C'$ meets some third coin on some state in the course of the game, then it is this third coin that gets removed.)

Conversely, if Alice can win in every position in which only two states hold coins, she can use the following strategy. In the initial
position she chooses a pair of coins, $C$ and $C'$ say, and plays as if there were no other coins, that is, she applies her winning
strategy for the position in which $C$ and $C'$ cover the same states as they do in the initial position and all other coins are removed.
This brings the game to a position in which either $C$ or $C'$ is removed. Then Alice chooses another pair of coins and again plays as if
these were the only coins, and so on. Since at least one coin is removed in each round, Alice eventually wins.
\end{proof}

Observe that Lemma~\ref{localization} implies a cubic (in the number $n$ of states of the underlying DFA) upper bound on the number of
moves in any game that Alice wins. Indeed, suppose she uses the strategy just described and works with a pair of coins $C$ and $C'$. Let
$q_i$ and $q'_i$ be the states holding the coins $C$ and $C'$ after the $i^{th}$ move of Alice. Then if Alice plays optimally, we must have
$\{q_i,q'_i\}\ne\{q_{i+j},q'_{i+j}\}$ whenever $j>0$. Indeed, the equality $\{q_i,q'_i\}=\{q_{i+j},q'_{i+j}\}$ means that wherever Alice
moves $C$ and $C'$ by her $(i+1)^{th}$, \dots, $(i+j-1)^{th}$ moves, Bob can force Alice to return the coins by her $(i+j)^{th}$ move to
the same states that the coins occupied after her $i^{th}$ move. Then Bob can force Alice to return $C$ and $C'$ to the same states also by
her $(i+2j)^{th}$, $(i+3j)^{th}$, \dots\ moves, whence none of the two coins can ever be removed, a contradiction.

Hence the number of Alice's moves in any round in which she works with any fixed pair of coins does not exceed $\binom{n}2$. Moreover, in
every \san\ there exist states $q$ and $q'$ such that $q{\cdot}a=q'{\cdot}a$ for some letter $a$. Therefore Alice can remove one coin by
her first move. After that she needs at most $n-2$ rounds to remove $n-2$ of the remaining $n-1$ coins. We thus obtain:

\begin{corollary}
\label{cubic} If Alice has a winning strategy in the synchronization game on a DFA with $n$ states, she can win in at most
$\binom{n}2(n-2)+1$ moves.
\end{corollary}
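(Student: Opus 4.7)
The plan is to turn the informal discussion preceding the statement into a formal argument that tracks two parameters: the number of coins remaining, and the number of moves spent reducing a single fixed pair. Lemma~\ref{localization} already gives us the blueprint — Alice wins globally by repeatedly invoking her winning strategy on a chosen pair of coins, ignoring the others until that pair merges. So the task reduces to (i) bounding the length of a single such ``round'' and (ii) counting the rounds.

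For (i), I would fix a pair of coins $C,C'$ and let $q_i,q'_i$ denote the states they occupy after Alice's $i^{th}$ move in this round. The key claim is that under optimal play the unordered pairs $\{q_i,q'_i\}$ are pairwise distinct. This is exactly the non-repetition argument sketched just before the corollary: if $\{q_i,q'_i\}=\{q_{i+j},q'_{i+j}\}$ with $j>0$, then Bob can adopt a periodic reply that forces the same configuration to reappear after every $j$ of Alice's moves, so $C$ and $C'$ never merge, contradicting the assumption that Alice was applying a winning two-coin strategy. Since there are at most $\binom{n}{2}$ unordered pairs of states of $Q$, each round of Alice's strategy terminates within $\binom{n}{2}$ of her moves.

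For (ii), the initial position has $n$ coins, and a standard observation about \sa\ supplies the ``$+1$'' in the bound: in any \san\ there exist distinct states $q,q'$ and a letter $a$ with $q{\cdot}a = q'{\cdot}a$, so Alice can eliminate at least one coin with her very first move. From the resulting position with at most $n-1$ coins she then conducts $n-2$ successive rounds, each between a freshly chosen pair and costing her at most $\binom{n}{2}$ moves by part (i), until a single coin remains. Adding the initial move to the $n-2$ rounds yields the stated bound $\binom{n}{2}(n-2)+1$.

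The only subtle point is the looping argument in step (i), and in particular making sure Bob's strategy really can be ``iterated'' when Alice tries to deviate. The clean way to phrase this is to note that Bob's winning strategy from the two-coin position $\{q_i,q'_i\}$ depends only on that position, so whatever moves Alice makes in between, any return of the configuration to $\{q_i,q'_i\}$ lets Bob restart the same strategy; an inductive application then yields infinitely many recurrences and the two coins survive forever. The rest of the proof is bookkeeping and uses nothing beyond Lemma~\ref{localization} and the elementary fact about coinciding transitions in \sa.
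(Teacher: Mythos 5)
Your proposal is correct and follows essentially the same route as the paper: the non-repetition argument for the pair $\{q_i,q'_i\}$ via Bob's iterated periodic reply, the bound of $\binom{n}2$ moves per round, and the observation that a synchronizing automaton has a merging letter so the first move already removes a coin, leaving $n-2$ rounds. Nothing to add.
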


Now we return to the decidability question.
\begin{theorem}
\label{algorithm} Let $\mathrsfs{A}=(Q,\Sigma)$ be a DFA with $|Q|=n$ and $|\Sigma|=k$. There exists an algorithm that in $O(n^2k)$ time
decides who has a winning strategy in the synchronization game on $\mathrsfs{A}$.
\end{theorem}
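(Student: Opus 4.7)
The plan is to reduce the question, via Lemma~\ref{localization}, to a family of alternating reachability games on pairs of states, and to solve them all simultaneously by a standard backward attractor computation on a pair graph.

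By Lemma~\ref{localization}, Alice wins on $\mathrsfs{A}$ iff she wins the 2-coin game starting from every unordered pair $\{p,q\}$ of distinct states. I would therefore build the \emph{pair graph} $G$ whose vertices are the $\binom{n}{2}$ unordered 2-subsets of $Q$ together with one extra sink vertex $\star$. For every pair $\{p,q\}$ and every letter $a\in\Sigma$, $G$ has an arc from $\{p,q\}$ labelled $a$ to $\{p{\cdot}a,q{\cdot}a\}$ when $p{\cdot}a\ne q{\cdot}a$, and to $\star$ otherwise. The 2-coin game from $\{p,q\}$ is then exactly the alternating reachability game on $G$ in which Alice (moving first) tries to reach $\star$ while Bob tries to avoid it forever. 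The graph has $O(n^2)$ vertices and $\binom{n}{2}k$ arcs, and it is constructed from $\mathrsfs{A}$ in $O(n^2k)$ time.

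Alice's winning region is then computed by the classical fixpoint: double each vertex into an Alice-to-move copy and a Bob-to-move copy, mark both copies of $\star$ as won, and iteratively mark each Alice-side vertex that has some outgoing arc into an already marked Bob-side vertex, and each Bob-side vertex all of whose outgoing arcs lead to already marked Alice-side vertices. With per-vertex counters of still-unsettled outgoing arcs on Bob's side, every arc is processed $O(1)$ times, for total time $O(n^2k)$. The algorithm outputs \emph{Alice wins} precisely when every pair $\{p,q\}$ appears in the Alice-to-move winning region, which is a final $O(n^2)$ scan.

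The only care needed is to distinguish whose turn it is: the same pair may be won by Alice-to-move but lost by Bob-to-move, so one must really work with $2\binom{n}{2}$ game positions rather than with pairs alone. Correctness of the reduction is Lemma~\ref{localization}, and correctness of the attractor fixpoint is standard for finite reachability games, so no genuine obstacle arises; everything fits comfortably in the target bound $O(n^2k)$.
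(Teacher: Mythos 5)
Your proposal is correct and is essentially the paper's own proof: the doubled pair graph with Alice-to-move and Bob-to-move copies plus the sink $\star$ is exactly the auxiliary automaton $\mathrsfs{P}=(P\times\{0,1\}\cup\{s\},\Sigma)$ constructed there, and the backward attractor fixpoint with per-vertex counters is the paper's reverse-edge marking propagation. If anything, your explicit counter bookkeeping for the universal (Bob) vertices is slightly more careful than the paper's appeal to plain breadth-first search.
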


\begin{proof}
We describe the algorithm rather informally. First we construct a new DFA $\mathrsfs{P}=(P\times\{0,1\}\cup\{s\},\Sigma)$ where $P$ is the
set of all positions with two coins (each such position is specified by a couple of states holding coins) and $s$ is an extra state. The
action of the letters is defined as follows: all letters fix $s$ and if $p\in P$ is the position in which two states $q,q'\in Q$ hold
coins, $x\in\{0,1\}$, and $a\in\Sigma$, then
$$(p,x){\cdot}a=
  \begin{cases}
    (p',1-x) & \text{if $q{\cdot}a\ne q'{\cdot}a$},\\
    s & \text{otherwise},
  \end{cases}$$
where $p'$ is the position in which $q{\cdot}a$ and $q'{\cdot}a$ hold coins. Thus, the automaton $\mathrsfs{P}$ encodes `transcripts' of
all games starting in positions in $P$; the extra bit $x$ controls whose turn it is to move: Alice moves if $x=0$ and Bob moves if $x=1$.
Clearly, $\mathrsfs{P}$ has $n^2-n+1$ states and $k(n^2-n+1)$ edges (transitions).

We mark the state $s$ and then recursively propagate the marking to $P\times\{0,1\}$: a state of the form $(p,0)$ is marked if and only if
there is an $a\in\Sigma$ such that $(p,0){\cdot}a$ is marked and a state of the form $(p,1)$ is marked if and only if for all $a\in\Sigma$
the states $(p,1){\cdot}a$ are marked. Clearly, the marking can be done by a breadth-first search in the underlying digraph of
$\mathrsfs{P}$ with all edges reversed. The well known time estimate for breadth-first search in a graph with $v$ vertices and $e$ edges is
$O(v+e)$, see, e.g., Section~22.2 in~\cite{Cormen&Leiserson&Rivest&Stein:2001}, whence we conclude that the marking can be completed in
$O(n^2k)$ time. It follows from the construction of $\mathrsfs{P}$ and from the marking rules that Alice can win in the game starting at a
position $p\in P$ if and only if the state $(p,0)$ is marked. This and Lemma~\ref{localization} readily imply that Alice has a winning
strategy in the game on $\mathrsfs{A}$ if and only if all states of the form $(p,0)$ get marked (or, equivalently, all states of
$\mathrsfs{P}$ get marked).
\end{proof}

In contrast, we show that it is rather hard to decide whether or not Alice can win after a certain number of moves. Namely, consider the
following decision problem:

\smallskip

\noindent\textsc{Short SynchroGame:} \emph{given a DFA $\mathrsfs{A}$ and a positive integer $\ell$, is it true that Alice can win the
synchronization game on $\mathrsfs{A}$ in at most $\ell$ moves?}

\begin{theorem}
\label{martyugin} \textsc{Short SynchroGame} is PSPACE-complete.
\end{theorem}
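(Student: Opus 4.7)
The plan is to establish both membership in PSPACE and PSPACE-hardness.

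For membership, I would first invoke Corollary~\ref{cubic} to assume $\ell \le \binom{n}{2}(n-2)+1$: any larger bound reduces to the unbounded question, which is polynomial by Theorem~\ref{algorithm}. A game configuration is then determined by the current set of coined states (a subset of $Q$), the move counter, and a single turn bit, which amounts to polynomial space altogether. The game is decided by an alternating polynomial-time procedure that branches existentially on Alice's letter choice and universally on Bob's, accepting as soon as only one coin remains and rejecting once the counter exceeds $\ell$. Since APTIME equals PSPACE, membership follows.

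For hardness, I would reduce from the quantified Boolean formula problem on a totally quantified 3-CNF $\Psi = Q_1 x_1 \cdots Q_n x_n\,\phi$. The target automaton $\mathrsfs{A}_\Psi$ has an alphabet containing two ``bit'' letters $0$ and $1$ together with a few auxiliary letters. Its states split into three pieces: a \emph{counter gadget} whose unique coin advances on every letter and thereby tracks the index of the variable currently being assigned; one \emph{clause coin} per clause of $\phi$, routed so that it merges with a fixed sink exactly when the partial assignment committed so far satisfies the clause; and a small collection of \emph{trap coins} used to penalise off-protocol moves. In the intended play, on the $i$-th move the player whose turn it is selects $0$ or $1$ and the counter gadget writes this bit into the slot of $x_i$, irrespective of whether $Q_i$ matches that player's quantifier type. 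Choosing $\ell = n + O(|\phi|)$ so that the last few moves suffice for Alice to drive every absorbed clause coin to the global sink makes Alice win in $\ell$ moves if and only if $\Psi$ is true.

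The crux is the design of the trap coins. Because both players draw letters from the same alphabet, the rules cannot syntactically prevent Alice from writing the value of a universal variable or Bob from writing the value of an existential one. The trap coins must therefore be wired asymmetrically so that any deviation from the intended protocol is strictly disadvantageous to the deviator: a ``cheating'' Alice supplies Bob with two trap coins that he can keep alive forever, while a ``cheating'' Bob creates an extra merging opportunity that lets Alice synchronise well within the budget. Calibrating these penalties using polynomially many states, without corrupting honest play, is the delicate heart of the construction and the step I expect to occupy most of the work; the clause gadgets and the counter gadget themselves are routine once the protocol has been fixed.
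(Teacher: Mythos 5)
Your membership argument is sound and is essentially the paper's: cap $\ell$ by the cubic bound of Corollary~\ref{cubic} (falling back on the algorithm of Theorem~\ref{algorithm} for larger $\ell$), then decide the truncated game using only a subset of $Q$, a counter and a turn bit. The paper unfolds the resulting and/or recursion by an explicit depth-first search in polynomial space; your appeal to $\mathrm{APTIME}=\mathrm{PSPACE}$ packages the same computation and is equally valid.

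The hardness half, however, has a genuine gap, and it sits exactly where you place it: the trap coins. You never specify how they are wired, and you yourself identify calibrating their penalties ``without corrupting honest play'' as the delicate heart of the construction --- so as written the reduction is a plan, not a proof. More importantly, the obstacle you are engineering around is not actually there. Reduce from \textsc{QSAT} in its game-theoretic form with strictly alternating quantifiers (the standard PSPACE-complete version, to which any instance can be padded with dummy variables): Alice assigns $x_1$, Bob assigns $x_2$, and so on. Then the turn order of the synchronization game already coincides with the quantifier prefix --- on move $j$ the variable $x_j$ is assigned by whichever player moves, and that is precisely the player the quantifier designates. There is no ``off-protocol'' play to punish, hence no need for trap coins, no counter gadget (the column index of each clause thread counts moves for free), and no $O(|\phi|)$ cleanup phase. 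This is what the paper does via Eppstein's construction: states $q_{i,j}$ with letter $a$ meaning $x_j:=1$ and $b$ meaning $x_j:=0$, where $q_{i,j}$ jumps to the sink $z$ as soon as the literal just made true occurs in $c_i$ and otherwise advances to $q_{i,j+1}$; Alice then wins in $\ell=n$ moves if and only if the quantified formula is true. Until you either carry out the trap-coin calibration in full or switch to the alternating-quantifier form of \textsc{QSAT}, the hardness direction is unproven.
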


\begin{proof}
We first verify that \textsc{Short SynchroGame} lies in the class PSPACE. Take an arbitrary instance $(\mathrsfs{A}=(Q,\Sigma),\ell)$ of
the problem with $|Q|=n$, $|\Sigma|=k$. If $\ell\ge\binom{n}2(n-2)+1$, then by Corollary~\ref{cubic} Alice can win on $\mathrsfs{A}$ in at
most $\ell$ moves whenever she can win on $\mathrsfs{A}$. Thus, for instances of \textsc{Short SynchroGame} with
$\ell\ge\binom{n}2(n-2)+1$, we can solve the problem even in polynomial time invoking the algorithm from the proof of
Theorem~\ref{algorithm}. Therefore we can restrict the problem to instances with $\ell<\binom{n}2(n-2)+1$. (This explains in particular
that it does not really matter whether $\ell$ is given in binary or in unary.)

For each $P$ being a non-empty subset of $Q$ and each non-negative integer $m\le\ell$, introduce two Boolean variables $A(P,m)$ and $B(P,m)$.
The value of $A(P,m)$ is 1 if and only if Alice wins in at most $m$ moves starting from the position in which only states in $P$ hold
coins. The value of $B(P,m)$ is 1 if and only if Alice wins in at most $m$ moves in every position that can arise after Bob's move in the
position in which only states in $P$ hold coins. Then the answer to the instance $(\mathrsfs{A}=(Q,\Sigma),\ell)$ of \textsc{Short
SynchroGame} is `YES' if and only if $A(Q,\ell)=1$ and there is a straightforward recursion that allows one to calculate the values
of the variables $A(P,m)$ and $B(P,m)$:
\begin{align}
\label{eq:recursion}
A(P,m)&=\bigvee_{a\in\Sigma} B(P{\cdot}a,m-1) &&\text{for $m>0$,}\notag\\
B(P,m)&=\bigand_{a\in\Sigma} A(P{\cdot}a,m) &&\text{for all $m$,}\\
A(P,0)&=1 &&\text{if and only if $P$ is a singleton.}\notag
\end{align}
Here $P{\cdot}a$ stands for the set $\{q{\cdot}a\mid q\in P\}$.

The total number of the variables $A(P,m)$ and $B(P,m)$ is of magnitude $2^{n+1}\ell$ so exponential in the size of the input. Nevertheless
it is easy to unfold the recursion~\eqref{eq:recursion} in polynomial space by a sort of depth-first search. As above, we prefer to
describe our algorithm informally. At its generic step, the algorithm tries to evaluate some $A(P,m)$ or $B(P,m)$. Consider the case of
$A(P,m)$. If $m=0$, the algorithm simply checks whether $P$ is a singleton (in other words, if Alice has won) and sets $A(P,0)=1$ if this
is the case and $A(P,0)=0$ otherwise. If $m>0$, the algorithm verifies if all variables of the form $B(P{\cdot}a,m-1)$ where $a\in\Sigma$
have already been evaluated and if this is the case, evaluates $A(P,m)$ according to~\eqref{eq:recursion}. As soon as the value of $A(P,m)$
has been found, the algorithm stores the value but forgets the set $P$ and the used values of $B(P{\cdot}a,m-1)$ so that the space
previously occupied by these data can be re-used. In the case when some of the variables $B(P{\cdot}a,m-1)$ have not yet been evaluated,
the algorithm postpones evaluation of $A(P,m)$ and tries to evaluate the yet unknown variable $B(P{\cdot}a,m-1)$ with the least $a$
(according to a fixed ordering of the alphabet $\Sigma$). In the same manner the algorithm works when evaluating $B(P,m)$.

Clearly, each set $P$ that appears in the course of the implementation must be of the form $P=Q{\cdot}a_1\cdots a_s$,
where $a_1,\dots,a_s\in\Sigma$ and $2s\le\ell$. When calculating the value of $A(P,\ell-\frac{s}2)$ (if $s$ is even) or
$B(P,\ell-\frac{s+1}2)$ (if $s$ odd), the algorithm only needs to maintain the following data:
\begin{itemize}
\item the sets $Q$, $Q{\cdot}a_1$, $Q{\cdot}a_1a_2$, \dots, $Q{\cdot}a_1\cdots a_{s-1}$ that appear on the way from $Q$ to $P$ and the set $P$ itself;
\item the already known but not yet used values of variables of the form $B(Q{\cdot}b_1,\ell-1)$, $A(Q{\cdot}a_1b_2,\ell-1)$, \dots,
$B(Q{\cdot}a_1\cdots a_{t-1}b_t,\ell-\frac{t+1}2)$ for odd $t\le s$, $A(Q{\cdot}a_1\cdots a_{t-1}b_t,\ell-\frac{t}2)$ for even $t\le s$;
\item pointers to the yet unknown variables of the form $B(Q{\cdot}b_1,\ell-1)$, $A(Q{\cdot}a_1b_2,\ell-1)$, \dots,
$B(Q{\cdot}a_1\cdots a_{t-1}b_t,\ell-\frac{t+1}2)$ for odd $t\le s$,\linebreak $A(Q{\cdot}a_1\cdots a_{t-1}b_t,\ell-\frac{t}2)$ for even
$t\le s$ with the least $b_1,b_2,\dots,b_t$ respectively.
\end{itemize}
At each step one has to store at most $2\ell$ sets of size at most $n$, at most $(k-1)(2\ell-1)$ bits for the already calculated but not
yet used values, and at most $2\ell-1$ pointers to the `next' variables to be evaluated. Thus, a polynomial space suffices.

In order to show that \textsc{Short SynchroGame} is PSPACE-complete, we use a reduction from the well known PSPACE-complete problem
\textsc{QSAT} (Quantified Satisfiability) in its game-theoretic form, see Section~19.1 in~\cite{Papadimitriou:1994}. An instance of
\textsc{QSAT} is a Boolean formula in conjunctive normal form with variables $x_1,\dots,x_n$. Alice and Bob play on such an instance
alternatingly: first Alice assigns a value 0 or 1 to $x_1$, then Bob assigns a value to $x_2$ and so on. Alice wins the game if after all
the variables get some values, the formula becomes true; Bob wins if the formula becomes false.

For the reduction we use Eppstein's construction~\cite{Eppstein:1990}. We reproduce it here for the reader's convenience. Given an
arbitrary instance $\psi$ of \textsc{QSAT} with $n$ variables $x_1,\dots,x_n$ and $m$ clauses $c_1,\dots,c_m$, we construct a DFA
$\mathrsfs{A}(\psi)$ with 2 input letters $a$ and $b$ as follows. The state set $Q$ of $\mathrsfs{A}(\psi)$ consists of $(n+1)m$ states
$q_{i,j}$, $1 \le i \le m$, $1 \le j \le n+1$, and a special state $z$. The transitions are defined by
\begin{align*}
& q_{i,j}{\cdot}a =
\begin{cases}
    z \text{ if the literal $x_j$ occurs in $c_i$},\\
    q_{i,j+1} \text{ otherwise}
\end{cases} && \text{ for $1 \le i \le m$, $1 \le j \le n$;} \\
&q_{i,j}{\cdot}b =
\begin{cases}
    z \text{ if the literal $\neg x_j$ occurs in $c_i$},\\
    q_{i,j+1} \text{ otherwise}
\end{cases} && \text{ for $1 \le i \le m$, $1 \le j \le n$;} \\
&q_{i,n+1}{\cdot}a=q_{i,n+1}{\cdot b}=z{\cdot a}={z\cdot b}=z && \text{ for $1\le i\le m$.}
\end{align*}
Fig.~\ref{fig:A2_example} shows an automaton of the form $\mathrsfs{A}(\psi)$ build for the \textsc{QSAT} instance
\begin{align*}
\psi_0&=\{x_1 \vee x_2\vee x_3,\,\neg x_1 \vee x_2\vee x_3,\, x_1 \vee \neg x_2 \vee x_3,\,\neg x_2 \vee \neg x_3\}.
\end{align*}
If at some state $q\in Q$ in Fig.~\ref{fig:A2_example} there is no outgoing edge labelled $c\in\{a,b\}$, the edge $q\stackrel{c}{\to}z$ is
assumed (those edges are omitted to improve readability).

\begin{figure}[tb]
\unitlength=.75mm
\begin{center}
\begin{picture}(120,85)(-100,-10)
\node(n478)(-50,0){$q_{1,2}$} \node(n479)(10,0){$q_{1,4}$} \node(n480)(-80,0){$q_{1,1}$} \node(n481)(-20,0){$q_{1,3}$}
\node(n75)(-50,20){$q_{2,2}$} \node(n32)(-20,20){$q_{2,3}$} \node(n41)(10,20){$q_{2,4}$} \node(n202)(-80,20){$q_{2,1}$}
\node(n42)(10,40){$q_{3,4}$} \node(n172)(-80,40){$q_{3,1}$} \node(n14)(-50,40){$q_{3,2}$} \node(n472)(-20,40){$q_{3,3}$}
\node(n474)(-50,60){$q_{4,2}$} \node(n475)(10,60){$q_{4,4}$} \node(n476)(-80,60){$q_{4,1}$} \node(n477)(-20,60){$q_{4,3}$}

\drawedge(n480,n478){$b$} \drawedge(n478,n481){$b$} \drawedge(n32,n41){$b$} \drawedge(n472,n42){$b$}
\drawedge[ELdist=1.1,ELside=r](n476,n474){$a,b$} \drawedge[ELside=r](n474,n477){$a$} \drawedge[ELside=r](n477,n475){$a$}
\drawedge(n202,n75){$a$} \drawedge(n75,n32){$b$} \drawedge[ELdist=1.1](n172,n14){$b$} \drawedge(n14,n472){$a$}
\drawedge[ELdist=1.1](n481,n479){$b$}

\node[Nw=8.32,Nh=7.0,Nmr=0.0](n1310)(-65,70){$x_1$} \node[Nw=8.32,Nh=7.0,Nmr=0.0](n1316)(-35,70){$x_2$}
\node[Nw=8.32,Nh=7.0,Nmr=0.0](n1318)(-5,70){$x_3$} \node[Nw=8.32,Nh=7.0,Nmr=0.0](n1367)(-95,0){$c_1$}
\node[Nw=8.32,Nh=7.0,Nmr=0.0](n1368)(-95,20){$c_2$} \node[Nw=8.32,Nh=7.0,Nmr=0.0](n1369)(-95,40){$c_3$}
\node[Nw=8.32,Nh=7.0,Nmr=0.0](n1370)(-95,60){$c_4$} \node(n1646)(30,30){$z$}

\end{picture}
\end{center}
\caption{The automaton $\mathrsfs{A}(\psi_0)$} \label{fig:A2_example}
\end{figure}

Observe that Alice wins on $\psi_0$: she may start with letting $x_1=1$ thus ensuring that the first and the third clause become true. Now
if Bob responds by letting $x_2=0$, then the fourth clause becomes true and Alice wins by letting $x_3=1$ which makes also the second
clause be true. If Bob responds by letting $x_2=1$, then the second clause becomes true and Alice wins by letting $x_3=0$.

This winning strategy precisely corresponds to the following winning strategy for Alice in the synchronization game on
$\mathrsfs{A}(\psi_0)$: Alice starts with the move $a$ and if Bob responds with $b$ (respectively $a$), she wins by using $a$ (respectively
$b$).

In general, if Alice has a winning strategy for an instance $\psi$ of \textsc{QSAT} with $n$ variables $x_1,\dots,x_n$ and $m$ clauses
$c_1,\dots,c_m$, she can imitate this strategy in the synchronization game on $\mathrsfs{A}(\psi)$ using the move $a$ whenever she lets
some variable to be true and using the move $a$ whenever she lets some variable to be false. Thus Alice wins the synchronization game in at
most $n$ moves. Conversely, if Alice has a strategy that allows her to win the synchronization game on $\mathrsfs{A}(\psi)$ in at most $n$
moves, she can imitate this strategy in the game on $\psi$ assigning to the current variable values 1 or 0 according to whether her current
move in the game on $\mathrsfs{A}(\psi)$ should be $a$ or $b$. To justify this claim, it suffices to observe that a truth assignment
$\tau:\{x_1,\dots,x_n\}\to\{0,1\}$ enforces $c_i(\tau(x_1),\dots,\tau(x_n))=1$ if and only if the word $a_1a_2\cdots a_n$ of length $n$
defined by
$$a_j=\begin{cases}
a &\text{ if } \tau(x_j)=1,\\
b &\text{ if } \tau(x_j)=0
\end{cases}$$
sends all states $q_{i,1},\dots,q_{i,n+1}$ of the automaton $\mathrsfs{A}(\psi)$ to the state $z$.

Thus, the answer to an instance $\psi$ of \textsc{QSAT} is `YES' if and only if so is the answer to the instance $(\mathrsfs{A}(\psi),n)$
of \textsc{Short SynchroGame} where $n$ is the number of variables of $\psi$. This reduces \textsc{QSAT} to \textsc{Short SynchroGame}.
\end{proof}

Though Corollary~\ref{cubic} and Theorems~\ref{algorithm} and \ref{martyugin}  are worth being registered (as they answer to the most
natural questions related to synchronization games), the reader acquainted with the theory of \sa\ immediately realizes that these results
closely follow some more or less standard patterns. Now we proceed with a more original contribution.

Suppose that Alice has a winning strategy in a synchronization game on an $n$-state DFA. Corollary~\ref{cubic} provides an cubic upper
bound for the number of moves in the game. What about lower bounds? Our next result provides a transparent construction from which we can
extract a quadratic lower bound.

\begin{theorem}
\label{construction} Let $\mathrsfs{A}=(Q,\Sigma)$ be a \san\ with $|Q|=n$, $|\Sigma|\ge2$ and let $\ell$ be the minimum length of \sws\
for $\mathrsfs{A}$. There exists a DFA $\mathrsfs{D}$ with $2n$ states such that Alice wins in the synchronization game on $\mathrsfs{D}$
but needs at least $\ell$ moves for this.
\end{theorem}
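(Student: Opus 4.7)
The plan is to exhibit an explicit construction of $\mathrsfs{D}$ with $2n$ states, verifying directly both that Alice wins and that she cannot win in fewer than $\ell$ moves. I would take the state set of $\mathrsfs{D}$ to be $Q\sqcup Q'$, where $Q'=\{q':q\in Q\}$ is a disjoint copy of $Q$, and define transitions so that the two copies are tightly coupled: each $a\in\Sigma$ would act on the $Q$-copy exactly as in $\mathrsfs{A}$, while on the $Q'$-copy it would be defined so that coins on $Q'$ can only be merged with coins on $Q$ after the letters played so far already form a reset word of $\mathrsfs{A}$. If necessary, I would augment $\Sigma$ with one auxiliary letter (the statement places no restriction on $\mathrsfs{D}$'s alphabet) whose role is to trigger this merge once the $Q$-copy has been synchronized. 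The intuition is that $Q'$ behaves like a ``verifier'' which refuses to collapse onto $Q$ until $\mathrsfs{A}$ itself has been reset.

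With this construction in hand, I would verify three things in sequence. First, $\mathrsfs{D}$ is synchronizing: a reset word can be built by concatenating a shortest reset word of $\mathrsfs{A}$ (which sends $Q$ to a single state) with the deterministic drain from $Q'$ into that state. Second, Alice has a winning strategy: by Lemma~\ref{localization}, it suffices to check that she wins from every 2-coin position, and the coupling between $Q$ and $Q'$ would be arranged precisely so that in any such position Alice can always choose a letter that either immediately merges the two coins or leads to a position in which every Bob-response is again Alice-winning. Third, no winning play is shorter than $\ell$ moves: the crucial invariant is that the only way to reach a singleton configuration is to drive the $Q$-coins onto a single state, and this requires the sequence of $\Sigma$-letters played so far to be a reset word of $\mathrsfs{A}$, hence of length at least $\ell$.

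The main obstacle is the second step. The $Q'$-transitions must be simultaneously strong enough to guarantee that Alice always has a winning continuation regardless of Bob's play (so that Bob cannot exploit the coupling to strand a coin on $Q'$ indefinitely or to force two $\mathrsfs{A}$-coins into an unresolvable configuration), and weak enough that no shortcut is introduced that would let Alice win in fewer than $\ell$ moves (so that the $\Sigma$-projection argument of the third step genuinely forces a reset word for $\mathrsfs{A}$). Getting this balance right is the whole content of the construction and is where I expect most of the work to lie; the other two verifications should then follow by straightforward combinatorial arguments, with the lower bound reducing to the definition of $\ell$.
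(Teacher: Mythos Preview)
Your plan has the right shape---two coupled copies of $Q$---but the concrete proposal contains a structural flaw that blocks the ``Alice wins'' step. You stipulate that each $a\in\Sigma$ acts on the $Q$-copy \emph{exactly as in $\mathrsfs{A}$}. But the transition function does not know whose turn it is: Bob's letters act on the $Q$-copy too. So on the $Q$-copy the game is just the ordinary synchronization game on $\mathrsfs{A}$, and as Example~\ref{ex:cerny-n} shows, Bob may prevent that copy from ever reaching a singleton (e.g.\ for $\mathrsfs{A}=\mathrsfs{C}_n$). Your auxiliary ``trigger'' letter, whose role is to merge $Q'$ into $Q$ \emph{once the $Q$-copy has been synchronized}, therefore never fires. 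Conversely, if you let that letter also act nontrivially on the $Q$-copy so Alice can force progress there, you create exactly the shortcut you are trying to avoid, and the lower bound collapses. The tension you flag in your last paragraph is thus not a detail to be worked out but the whole problem, and the design you sketch does not resolve it. (A secondary issue: even if your $Q$-copy did synchronize, your argument only shows the \emph{total} play is a reset word of $\mathrsfs{A}$, giving $\ell$ total moves and hence only about $\ell/2$ moves for Alice.)

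The paper's construction resolves the tension by making the automaton itself record the parity of the move count. Take $\mathrsfs{D}=(Q\times\{0,1\},\Sigma)$, fix $b\in\Sigma$ and $q_0\in Q$, and set $(q,0){\cdot}a=(q{\cdot}a,1)$ while $(q,1){\cdot}b=(q,0)$ and $(q,1){\cdot}a=(q_0,1)$ for $a\ne b$. After Alice's first move every coin sits at level~1; from then on Bob is \emph{forced} to play $b$ (any other letter sends everything to $(q_0,1)$ and loses immediately), and playing $b$ just flips the level back to~0 without touching the $Q$-coordinate. Thus Bob's moves are rendered inert, and only Alice's letters act on the $Q$-coordinate---so Alice wins by spelling any reset word of $\mathrsfs{A}$, and conversely her sequence of moves must itself be a reset word, giving the lower bound $\ell$ on her moves. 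The idea you are missing is precisely this device for neutralising Bob rather than trying to outplay him on $\mathrsfs{A}$.
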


\begin{proof}
We fix a letter $b\in\Sigma$ and a state $q_0\in Q$. Now let $\mathrsfs{D}=(Q\times\{0,1\},\Sigma)$ where for each $q\in Q$ the action of
an arbitrary letter $a\in\Sigma$ is defined as follows:
$$(q,0){\cdot}a=(q{\cdot}a,1), \qquad (q,1){\cdot}a=
  \begin{cases}
    (q,0) & \text{if $a=b$}, \\
    (q_0,1) & \text{otherwise}.
  \end{cases}$$
We call $\mathrsfs{D}$ the \emph{duplication} of $\mathrsfs{A}$. Fig.~\ref{fig:duplication} shows the duplication of the \v{C}ern\'{y}
automaton $\mathrsfs{C}_n$ from Fig.~\ref{fig:cerny-n} (with the state $0$ in the role of $q_0$).
\begin{figure}[th]
\begin{center}
    \unitlength=.85mm
       \begin{picture}(140,75)(0,0)
        \rpnode(qn2)(10,10)(20,7){\scriptsize{$(n{-}2,0)$}}
        \rpnode(qn1)(28,52)(20,7){\scriptsize{$(n{-}1,0)$}}
        \rpnode(q0)(70,70)(20,7){\scriptsize{$(0,0)$}}
        \rpnode(q1)(112,52)(20,7){\scriptsize{$(1,0)$}}
        \rpnode(q2)(130,10)(20,7){\scriptsize{$(2,0)$}}
        \rpnode(qn2_)(35,10)(20,7){\scriptsize{$(n{-}2,1)$}}
        \rpnode(qn1_)(45,35)(20,7){\scriptsize{$(n{-}1,1)$}}
        \rpnode(q0_)(70,45)(20,7){\scriptsize{$(0,1)$}}
        \rpnode(q1_)(95,35)(20,7){\scriptsize{$(1,1)$}}
        \rpnode(q2_)(105,10)(20,7){\scriptsize{$(2,1)$}}
        \drawedge[curvedepth=5](qn2,qn1_){$b$}
        \drawedge[curvedepth=5](qn1,q0_){$b$}
        \drawedge[curvedepth=5](q0,q1_){$a,b$}
        \drawedge[curvedepth=5](q1,q2_){$b$}
        \drawedge[curvedepth=5](qn2,qn2_){$a$}
        \drawedge[curvedepth=5](qn1,qn1_){$a$}
        \drawedge[curvedepth=5](q1,q1_){$a$}
        \drawedge[curvedepth=5](q2,q2_){$a$}
        \drawedge[curvedepth=5](qn2_,qn2){$b$}
        \drawedge[curvedepth=5](qn1_,qn1){$b$}
        \drawedge[curvedepth=5](q0_,q0){$b$}
        \drawedge[curvedepth=5](q1_,q1){$b$}
        \drawedge[curvedepth=5](q2_,q2){$b$}
        \drawedge[ELside=r](qn2_,q0_){$a$}
        \drawedge(qn1_,q0_){$a$}
        \drawedge[ELside=r](q1_,q0_){$a$}
        \drawedge(q2_,q0_){$a$}
        \drawloop[loopangle=270](q0_){$a$}
        \put(7,0){$\dots$}
        \put(128,0){$\dots$}
        \put(32,0){$\dots$}
        \put(103,0){$\dots$}
    \end{picture}
    \caption{The duplication of the automaton $\mathrsfs{C}_n$}\label{fig:duplication}
   \end{center}
\end{figure}
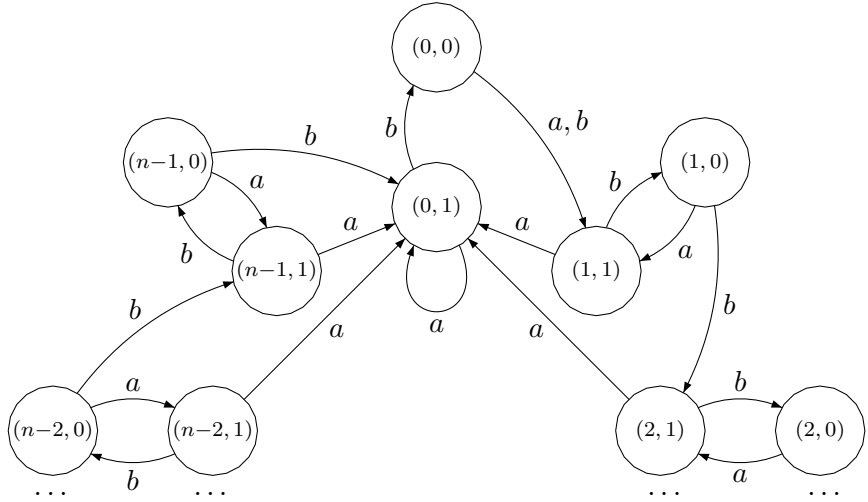

Suppose that Alice opens the game by selecting a letter $a\ne b$. After that only states of the form $(q,1)$ hold coins. Bob must reply
with the move $b$ since he loses immediately otherwise. After that coins cover the states $(q,0)$ with $q\in Q{\cdot}a\cup\{q_0\}$. Now if
Alice spells out a \sw\ for $\mathrsfs{A}$, she wins. Indeed, as soon as Bob selects a letter different from $b$, he loses immediately, and
if he replies with $b$ to all Alice's moves, each pair (Alice's move, Bob's move) has the same effect as applying the letter selected by
Alice in the DFA $\mathrsfs{A}$.

On the other hand, Alice needs at least $\ell$ moves to win if Bob replies with $b$ to each of her moves. Indeed, if Bob plays this way and
a winning sequence of Alice's moves forms a word $w\in\Sigma^*$, then after the last move of the sequence every state $(q,1)$ with $q\in
Q{\cdot}w$ still holds a coin. Thus, for Alice to win, $w$ must be a~\sw\ for $\mathrsfs{A}$, whence the length of $w$ is at least $\ell$.
\end{proof}

We denote by $\mathrsfs{D}_n$ the duplication of the \v{C}ern\'{y} automaton $\mathrsfs{C}_n$. Combining Theorem~\ref{construction} and the
fact that the minimum length of \sws\ for $\mathrsfs{C}_n$ is $(n-1)^2$, we obtain that Alice needs at least $(n-1)^2$ moves to win on
$\mathrsfs{D}_n$. (In fact, the exact number of moves needed is easily seen to be $(n-1)^2+1$.) Thus, we have found a series of $k$-state
DFAs ($k=2n$ is even) on which Alice's win requires a~quadratic in $k$ number of moves. A similar series can be constructed for odd~$k$: we
can just add an extra state to $\mathrsfs{D}_n$ and let both $a$ and $b$ send this added state to the state $(q_0,1) $.

We notice that the duplication of an arbitrary DFA belongs to a very special class of \sa\ as it can be reset by a word of length 2.
A~somewhat unexpected though immediate consequence of Theorem~\ref{construction} is that a~progress in understanding synchronization games
within this specific class may lead to a solution of a major problem in the theory of \sa.

\begin{corollary}
\label{quadartic} If for every $n$-state \san\ with a \sw\ of length~$2$ on which Alice can win, she has a winning strategy with $O(n^2)$
moves, then every $n$-state \san\ has a \sw\ of length $O(n^2)$.
\end{corollary}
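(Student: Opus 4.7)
The plan is to invoke the duplication construction of Theorem~\ref{construction} directly. Start with an arbitrary $n$-state synchronizing automaton $\mathscr{A}=(Q,\Sigma)$ and denote by $\ell$ the minimum length of reset words for $\mathscr{A}$. Form the duplication $\mathscr{D}$, which has $2n$ states.

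The first step is to check that $\mathscr{D}$ itself is reset by a word of length~$2$. Pick any letter $a\in\Sigma$ with $a\ne b$, which exists because $|\Sigma|\ge 2$. Reading off the transitions of $\mathscr{D}$, after one application of $a$ every state of the form $(q,0)$ moves to $(q{\cdot}a,1)$ and every state of the form $(q,1)$ moves to $(q_0,1)$, so all coins sit on states whose second coordinate is~$1$; a second application of $a$ then sends every such state to $(q_0,1)$. Hence $aa$ is a reset word of length~$2$ for $\mathscr{D}$.

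The second step is to combine this observation with Theorem~\ref{construction}. That theorem guarantees that Alice wins the synchronization game on $\mathscr{D}$, but any winning play for her must consist of at least $\ell$ moves. Thus $\mathscr{D}$ is a $(2n)$-state synchronizing automaton admitting a reset word of length~$2$ and on which Alice wins, so the hypothesis of the corollary applies. It yields a winning strategy for Alice on $\mathscr{D}$ of length $O((2n)^2)=O(n^2)$, and consequently $\ell=O(n^2)$. As $\mathscr{A}$ was arbitrary, every $n$-state synchronizing automaton has a reset word of length $O(n^2)$.

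No real obstacle stands in the way: all the quantitative work has already been packaged into Theorem~\ref{construction}, and the only thing to verify is the length-$2$ synchronization of the duplication, which is a one-line computation from the definition.
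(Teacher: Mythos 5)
Your proof is correct and is exactly the argument the paper intends: the corollary is presented as an immediate consequence of Theorem~\ref{construction} together with the observation that the duplication is reset by a word of length~$2$, which you verify correctly with $aa$ for $a\ne b$. The only (harmless) omission is the case $|\Sigma|=1$, where Theorem~\ref{construction} does not apply, but a unary synchronizing automaton on $n$ states trivially has a reset word of length at most $n-1$, so the conclusion holds there anyway.
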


Recall that all known results on synchronization of $n$-state DFAs (see~\cite{Pin:1983} and~\cite{Trahtman:2011} for the best bounds)
guarantee only the existence of \sws\ of length $\Omega(n^3)$.

\section{Paying for synchronization}
\label{pay}

Let $\mathrsfs{A}=(Q,\Sigma,\gamma)$ be a DWA, where $\gamma:Q\times\Sigma\to\mathbb{Z}_+$ is a cost function. For $w=a_1\cdots
a_k\in\Sigma^*$ and $q\in Q$, the cost of applying $w$ at $q$ is
$$\gamma(q,w)=\sum_{i=0}^{k-1}\gamma\bigl(q\cdot(a_1\cdots a_i),a_{i+1}).$$
If $\mathrsfs{A}$ is a \san\ and $w$ is its \sw, then the cost of synchronizing $\mathrsfs{A}$ by $w$ is defined as $\gamma(w)=\max_{q\in
Q}{\gamma(q,w)}$. The intuition for this choice of $\gamma(w)$ is as follows: we use $w$ to bring $\mathrsfs{A}$ to a certain state from an
unknown state, and therefore, we have to take the most costly case into account. (Of course, in some situations other definitions of the
cost of synchronization may make sense. For instance, if we treat synchronization in the flavor of Section~\ref{play}, that is, as the
process of moving coins initially placed on all states in $Q$ to a certain state, it is more natural to define the cost of the process as
$\sum_{q\in Q}{\gamma(q,w)}$. The results that follow can be adapted to this setting mutatis mutandis.)

\begin{figure}[bh]
\begin{center}
\unitlength=.95mm
\begin{picture}(20,35)(0,-7)
\node(A)(0,20){0}
\node(B)(20,20){1}
\node(C)(20,0){2}
\node(D)(0,0){3}
\drawedge[curvedepth=4](A,B){$a,{1}$}
\drawedge[curvedepth=-4,ELside=r](A,B){$b,{1}$}
\drawedge(B,C){$b,{1}$} \drawedge(C,D){$b,{1}$}
\drawedge(D,A){$a,{1}$}
\drawloop[loopangle=0](B){$a,{1}$}
\drawloop[loopangle=-0](C){$a,{1}$}
\drawloop[loopangle=-180](D){$b,{16}$}
\end{picture}
\caption{A deterministic weighted automaton}\label{fig:example}
\end{center}
\end{figure}
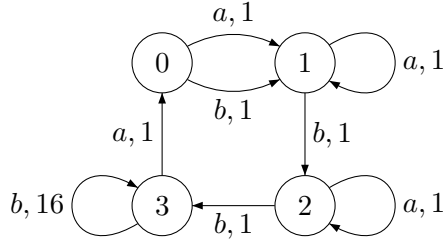

Fig.~\ref{fig:example} shows a DWA (transition costs are included in the labels) and illustrates the difference between two optimization
problems: minimizing synchronization cost and minimizing the length of \sws. The shortest \sw\ for the DWA is $b^3$ but the cost of
synchronizing by $b^3$ is 48. On the other hand, the longer word $a^2baba^2$ manages to avoid the `expensive' loop at the state 3 whence
the cost of synchronizing by $a^2baba^2$ is only 7.

We study in the computational complexity of the following decision problem:

\smallskip

\noindent\textsc{Synchronizing on Budget:} \emph{Given a DWA $\mathrsfs{A}=(Q,\Sigma,\gamma)$ and a positive integer $B$, is it true that
$\mathrsfs{A}$ has a \sw\ $w$ with $\gamma(w)\le B$?}

\smallskip

Here we assume that the values of $\gamma$ and the number $B$ are given in binary. (The unary version of \textsc{Synchronizing on Budget}
can be easily shown to be NP-complete on the basis of the NP-completeness of the problem \textsc{Short Reset
Word}~\cite{Rystsov:1980,Eppstein:1990}: given a DFA $\mathrsfs{A}$ and a positive integer $\ell$, is it true that $\mathrsfs{A}$ has a
reset word of length $\ell$?)

\begin{theorem}
\label{budget} \textsc{Synchronizing on Budget} is PSPACE-complete.
\end{theorem}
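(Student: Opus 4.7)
For PSPACE membership, my plan is to give an NPSPACE algorithm and invoke Savitch's theorem. The algorithm maintains a configuration that, for each $q\in Q$, records the pair $(q\cdot u,\gamma(q,u))$ reached after the prefix $u$ of the reset word guessed so far. Because $B$ is given in binary, each cost field occupies $O(\log B)$ bits, so a full configuration fits in $O(n(\log n+\log B))$ bits, polynomial in the input. At each step the algorithm nondeterministically guesses the next letter $a\in\Sigma$, updates every pair, rejects as soon as any accumulated cost exceeds $B$, and accepts as soon as all $n$ current positions coincide. Since every transition has strictly positive cost, after more than $B$ letters some accumulated cost must exceed $B$, so an $O(\log B)$-bit counter on the number of guessed letters suffices for termination.

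For PSPACE-hardness, I would reduce from the PSPACE-complete problem of deciding whether a partial DFA admits a carefully synchronizing word (a result of Martyugin). Given a partial DFA $\mathrsfs{A}=(Q,\Sigma,\delta)$, I would build a DWA $\mathrsfs{A}'=(Q,\Sigma,\delta',\gamma)$ by completing every undefined $\delta(q,a)$ to an arbitrary fixed state and setting $\gamma(q,a)=1$ when $\delta(q,a)$ is defined and $\gamma(q,a)=2^{|Q|}+1$ otherwise. Set the budget to $B=2^{|Q|}$, which takes $|Q|+1$ bits in binary. A reset word of cost at most $B$ for $\mathrsfs{A}'$ cannot fire any cost-$(2^{|Q|}+1)$ edge, so starting from every state it only traces transitions defined in $\mathrsfs{A}$; it is therefore a carefully synchronizing word for~$\mathrsfs{A}$. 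Conversely, the standard power-set argument shows that if $\mathrsfs{A}$ has a carefully synchronizing word at all, it has one of length at most $2^{|Q|}-1$, whose cost in $\mathrsfs{A}'$ is at most $B$. The reduction is polynomial.

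The main obstacle will be the externally cited PSPACE-hardness of careful synchronization for partial DFAs together with the exponential upper bound on shortest carefully synchronizing words needed to calibrate~$B$. Should either ingredient prove inconvenient to quote in exactly the form required, a self-contained backup would be a direct reduction from \textsc{QSAT} in the style of the Eppstein construction used in the proof of Theorem~\ref{martyugin}, augmenting it with transition weights that oblige the reset word to encode an entire alternating game tree so that Bob's adversarial choices become universal quantifications over subtrees whose aggregate cost must fit within the binary-encoded budget. The partial-DFA route is conceptually cleaner and is what I would try first.
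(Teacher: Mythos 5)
Your proposal is correct and follows essentially the same route as the paper: PSPACE membership via a nondeterministic letter-by-letter guessing algorithm (tracking current states and accumulated costs in polynomial space) plus Savitch's theorem, and PSPACE-hardness by reducing from Martyugin's \textsc{Careful Synchronization} problem, completing the partial automaton, pricing the formerly undefined transitions above an exponential budget, and using the $2^{|Q|}$-type bound on the length of shortest carefully synchronizing words. The only cosmetic differences are your choice of completion target and the exact calibration of the weights and budget, neither of which affects the argument.
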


\begin{proof}
By Savitch's theorem (see Section~4.3 in~\cite{Papadimitriou:1994}), in order to show that \textsc{Synchronizing on Budget} lies in the
class PSPACE, it suffices to solve this problem in polynomial space by a non-deterministic algorithm. A small difficulty is that for some
instances $(Q,\Sigma,\gamma;B)$ of \textsc{Synchronizing on Budget}, every \sw\ $w$ satisfying $\gamma(w)\le B$ may be exponentially long
in $|Q|$ and so even if our algorithm correctly guesses such a $w$, it would not have enough space to store its guess. To bypass the
difficulty, the algorithm should guess $w$ letter by letter. It guesses the first letter of $w$ (say, $a$), applies $a$ at every state
$q\in Q$ and saves two arrays: $\{q{\cdot}a\}$ and $\{\gamma(q,a)\}$. Each of the arrays clearly requires only polynomial space. Then the
algorithm guesses the second letter of $w$ and updates both arrays, etc. At the end of the guessing steps the algorithm check whether all
entries of the first array are equal (if so, then $w$ is indeed a \sw\ for $(Q,\Sigma)$) and whether the maximum number in the second array
is less than or equal to $B$ (if so, then synchronization is indeed achieved within the budget $B$).

To show that \textsc{Synchronizing on Budget} is PSPACE-complete, we use a~reduction from a problem concerning partial automata. A
\emph{partial} finite automaton (PFA) is a pair $\mathrsfs{A}=(Q,\Sigma)$, where $Q$ is the state set and $\Sigma$ is the input alphabet
whose letters act on $Q$ as partial transformations. Such a PFA is said to be \emph{carefully synchronizing} if there exists $w=a_1\cdots
a_\ell$ with $a_1,\dots,a_\ell\in\Sigma$ such that $q{\cdot}a_i$ with $1\le i\le \ell$ is defined for all $q\in Q{\cdot}(a_1\cdots
a_{i-1})$ and $|Q{\cdot}w|=1$. Every word $w$ with these properties is called a \emph{careful \sw} for $\mathrsfs{P}$. Informally,
a~careful \sw\ synchronizes $\mathrsfs{A}$ and manages to avoid any undefined transition.

The second author~\cite{Martyugin:2010} has recently proved that the following problem is PSPACE-complete:

\smallskip

\noindent\textsc{Careful Synchronization:} \emph{Is a given PFA carefully synchronizing?}

\smallskip

It is the problem that we reduce to \textsc{Synchronizing on Budget}. Our reduction relies on a known fact whose proof is included for the
reader's convenience.

\begin{lemma}
\label{estimate} The minimum length of careful \sws\ for carefully synchronizing PFAs with $n$ states does not exceed $2^n-n-1$.
\end{lemma}

\begin{proof}
Given a PFA $\mathrsfs{A}=(Q,\Sigma)$ with $|Q|=n$, consider the set of the non-empty subsets of $Q$ and let each $a\in\Sigma$ act on
$P\subseteq Q$ as follows:
$$P{\cdot}a=\begin{cases}
    \{q{\cdot}a\mid q\in P\} & \text{provided $q{\cdot}a$ is defined for all $q\in P$}, \\
    \text{undefined} & \text{otherwise}.
  \end{cases}$$
We obtain a new PFA $\mathrsfs{P}$, and it is clear that  $w\in\Sigma^*$ is a careful \sw\ for $\mathrsfs{A}$ if and only if $w$ labels a
path in $\mathrsfs{P}$ starting at $Q$ and ending at a singleton. A~path of minimum length does not visit any state of $\mathrsfs{P}$ twice
and stops as soon as it reaches a singleton. Hence the length of the path does not exceed the number of non-empty and non-singleton subsets
of $Q$, that is, $2^n-n-1$.
\end{proof}

Now take an arbitrary instance of \textsc{Careful Synchronization}, that is, a PFA $\mathrsfs{A}=(Q,\Sigma)$. We assign to $\mathrsfs{A}$
an instance of \textsc{Synchronizing on Budget} as follows. First, extend the action of each letter $a\in\Sigma$ to the whole set $Q$
letting
$$q\odot a=
  \begin{cases}
    q{\cdot}a & \text{if $q{\cdot}a$ is defined in $\mathrsfs{A}$}, \\
    q & \text{otherwise}.
  \end{cases}$$
These extended actions give rise to a DFA $\mathrsfs{A}'$ with the same state set $Q$ and input alphabet $\Sigma$. Further, let $|Q|=n$,
and define $\gamma:Q\times\Sigma\to\mathbb{Z}_+$ by the rule:
$$\gamma(q,a)=
  \begin{cases}
    1 & \text{if $q{\cdot}a$ is defined in $\mathrsfs{A}$}, \\
    2^n & \text{otherwise}.
  \end{cases}$$
This makes $\mathrsfs{A}'$ a DWA. The construction is illustrated by Fig.~\ref{fig:transformation}. Finally, let $B=2^n-1$. Observe that
the binary presentations of $B$ and of the values of $\gamma$ are of a linear in $n$ size so that the construction requires only polynomial
time in the size of the PFA $\mathrsfs{A}$.
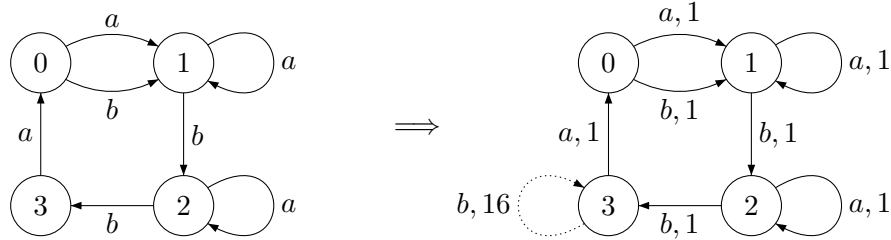
\begin{figure}[ht]
\begin{center}
\unitlength=.95mm
\begin{picture}(30,33)(20,-7)
\node(A)(0,20){0}
\node(B)(20,20){1}
\node(C)(20,0){2}
\node(D)(0,0){3}
\drawedge[curvedepth=4](A,B){$a$}
\drawedge[curvedepth=-4,ELside=r](A,B){$b$}
\drawedge(B,C){$b$}
\drawedge(C,D){$b$}
\drawedge(D,A){$a$}
\drawloop[loopangle=0](B){$a$}
\drawloop[loopangle=-0](C){$a$}
\end{picture}
\qquad
\begin{picture}(30,30)(-20,-7)
\put(-30,10){$\Longrightarrow$}
\node(A)(0,20){0}
\node(B)(20,20){1}
\node(C)(20,0){2}
\node(D)(0,0){3}
\drawedge[curvedepth=4](A,B){$a,{1}$}
\drawedge[curvedepth=-4,ELside=r](A,B){$b,{1}$}
\drawedge(B,C){$b,{1}$}
\drawedge(C,D){$b,{1}$}
\drawedge(D,A){$a,{1}$}
\drawloop[loopangle=0](B){$a,{1}$}
\drawloop[loopangle=-0](C){$a,{1}$}
\drawloop[dash={0.2 0.5}0,loopangle=-180](D){$b,{16}$}
\end{picture}
\caption{Transforming a partial automaton into a weighted automaton}\label{fig:transformation}
\end{center}
\end{figure}

We aim to show that the PFA $\mathrsfs{A}$ is carefully synchronizing if and only if the DWA $\mathrsfs{A}'$ can be synchronized within the
budget $B$. Indeed, if $w$ is a careful \sw\ for $\mathrsfs{A}$, then $w$ can be applied to every state in $\mathrsfs{A}$. This implies
that $w$ labels the same paths in $\mathrsfs{A}'$ as it does in $\mathrsfs{A}$ whence $w$ synchronizes $\mathrsfs{A}'$ and involves only
transitions with cost $1$. Therefore $\gamma(q,w)$ is equal to the length of $w$ for each $q\in Q$ and so is $\gamma(w)=\max_{q\in
Q}{\gamma(q,w)}$. By Lemma~\ref{estimate} $w$ can be chosen to be of length at most $2^n-n-1$, whence $\gamma(w)\le 2^n-n-1<2^n-1=B$.
Conversely, if $w$ is a \sw\ for $\mathrsfs{A}'$ with $\gamma(w)\le B$, then $\gamma(q,w)\le 2^n-1$ for each $q\in Q$, whence no path
labelled $w$ and starting at $q$ involves any transition with cost $2^n$. This means every transition in such a path is induced by a
transition with the same effect defined in $\mathrsfs{A}$. Therefore $w$ can be applied to every state in $\mathrsfs{A}$. Since all paths
labelled $w$ are coterminal in $\mathrsfs{A}'$, they have the same property in $\mathrsfs{A}$ and $w$ is a careful \sw\ for $\mathrsfs{A}$.
\end{proof}

\section{Open problems}
\label{problems}

Due to the space constraint we restrict ourselves to just two interesting problems.

\paragraph{Road Coloring games.} A digraph $G$ in which each vertex has the same out-degree $k$ is called a \emph{digraph of
out-degree} $k$. If we take an alphabet $\Sigma$ of size $k$, then we can label the edges of such $G$ by letters of $\Sigma$ such that the
resulting automaton will be complete and deterministic. Any DFA obtained this way is referred to as a \emph{coloring} of $G$.

The famous Road Coloring Problem asked for necessary and sufficient conditions on a digraph $G$ to admit a synchronizing coloring. The
problem has been recently solved by Trahtman~\cite{Trahtman:2009} and the solution implies that if $G$ has a synchronizing coloring, then
such a coloring can be found in $O(n^2k)$ time where $n$ is the number of vertices and $k$ is the out-degree of $G$,
see~\cite{Beal&Perrin:2008}.

Now consider the following \emph{Rood Coloring game}. Alice and Bob alternately label the edges of a given digraph $G$ of out-degree $k$ by
letters from an alphabet $\Sigma$ of size $k$ (observing the rule that no edges leaving the same vertex may get the same label) until $G$
becomes a DFA. Alice who plays first wins if the resulting DFA is synchronizing, and Bob wins otherwise.

\begin{problem}
Is there an algorithm that, given a digraph $G$ of constant out-degree, decides in polynomial in the size of $G$ time which player has a
winning strategy in the Road Coloring game on $G$?
\end{problem}

Observe that there are digraphs on which Alice wins by making random moves (for instance, the underlying digraphs of the automata in the
\v{C}ern\'{y} series can be shown to have this property); on the other hand, Bob can win on some digraphs admitting synchronizing
colorings, see Fig.~\ref{fig:flip-flop} for a simple example.
\begin{figure}[h]
\begin{center}
\begin{picture}(190,6)(-20,-3)
\unitlength=.5mm \node(B)(0,0){}  \node(D)(40,0){} \drawedge[curvedepth=7](B,D){} \drawedge[curvedepth=7](D,B){}
\drawloop[loopangle=180](B){} \drawloop[loopangle=0](D){} \node(B1)(120,0){}  \node(D1)(160,0){} \drawedge[curvedepth=7](B1,D1){$a$}
\drawedge[curvedepth=7](D1,B1){$b$} \drawloop[loopangle=180](B1){$b$} \drawloop[loopangle=0](D1){$a$}
\end{picture}
\caption{A digraph on which Bob wins the Road Coloring game and its synchronizing coloring}\label{fig:flip-flop}
\end{center}
\end{figure}
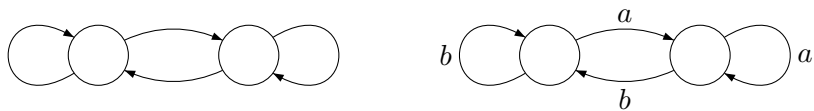

\paragraph{Synchronization games on weighted automata.} As a synthesis of the two topics of this paper, one can consider synchronization
games on DWAs where the aim of Alice is to minimize synchronization costs while Bob aims to prevent synchronization or at least to maximize
synchronization costs. In particular, we suggest to investigate the following problem that can be viewed as a common generalization of
\textsc{Short SynchroGame} and \textsc{Synchronizing on Budget}.

\smallskip

\noindent\textsc{SynchroGame on Budget:} \emph{Given a DWA $\mathrsfs{A}=(Q,\Sigma,\gamma)$ and a positive integer $B$, is it true that
Alice can win the synchronization game on $\mathrsfs{A}$ with a~sequence $w$ of moves satisfying $\gamma(w)\le B$?}

\begin{problem}
Find the computational complexity of \textsc{SynchroGame on Budget}.
\end{problem}

Clearly, the results of the present paper imply that \textsc{SynchroGame on Budget} is PSPACE-hard.


\begin{thebibliography}{99}
\bibitem{Ananichev&Volkov&Zaks:2007}
Ananichev, D.S., Volkov, M.V., Zaks, Yu.I.: Synchronizing automata with a letter of deficiency 2. Theor.\ Comput.\ Sci. 376, 30--41 (2007)

\bibitem{Beal&Perrin:2008}
B\'eal, M.-P., Perrin, D.: A quadratic algorithm for road coloring. Technical report, Universit\'e Paris-Est, 2008. Available under
\url{http://arxiv.org/abs/0803.0726}

\bibitem{Blass&Gurevich&Nachmanson&Veanes:2006}
Blass, A., Gurevich, Yu., Nachmanson, L., Veanes, M.: Play to test. In: Gries\-kamp,~W., Weise, C. (eds.), Formal Approaches to Software
Testing, Lect.\ Notes Comput.\ Sci., vol.\,3997, pp. 32--46. Springer (2006)

\bibitem{Cerny:1964}
\v{C}ern\'{y}, J.: Pozn\'{a}mka k homog\'{e}nnym eksperimentom s kone\v{c}n\'{y}mi automatami. Matematicko-fyzikalny \v{C}asopis Slovensk.\
Akad.\ Vied 14(3), 208--216 (1964) (in Slovak)

\bibitem{Cormen&Leiserson&Rivest&Stein:2001}
Cormen, T.H., Leiserson, C.E., Rivest, R.L., Stein, C.: Introduction to Algorithms. 3rd ed. MIT Press, Cambridge and McGraw-Hill (2009)

\bibitem{Eppstein:1990}
Eppstein, D.: Reset sequences for monotonic automata. SIAM J. Comput. 19, 500--510 (1990)

\bibitem{Fominykh&Volkov:2012}
Fominykh, F.M., Volkov, M.V.: P(l)aying for synchronization. In: Moreira, N., Reis, R. (eds.), Implementation and Application of Automata,
Lect.\ Notes Comput.\ Sci., vol.\,7381, pp. 159--170. Springer (2012)

\bibitem{Jungers:2012}
Jungers, R.M.: The synchronizing probability function of an automaton. SIAM J. Discrete Math. 26, 177--192 (2012)

\bibitem{Martyugin:2010}
Martyugin, P.V.: Complexity of problems concerning carefully synchronizing words for PFA and directing words for NFA. In: Ablayev, F.,
Mayr, E.W. (eds.), Computer Science -- Theory and Applications, Lect.\ Notes Comput. Sci., vol.\,6072, pp.\,288--302. Springer (2010)

\bibitem{Papadimitriou:1994}
Papadimitriou, C.H.:  Computational Complexity. Addison-Wesley (1994)

\bibitem{Perles&Rabin&Shamir:1963}
Perles, M., Rabin, M.O., Shamir, E.: The theory of definite automata. IEEE Trans.\ Electronic Comput. 12, 233--243 (1963)

\bibitem{Pin:1983}
Pin, J.-E.: On two combinatorial problems arising from automata theory. Ann.\ Discrete Math. 17, 535--548 (1983)

\bibitem{Rystsov:1980}
Rystsov, I.K.: On minimizing length of synchronizing words for finite automata. In: Theory of {D}esigning of {C}omputing {S}ystems,
pp.\,75--82. Institute of Cybernetics of Ukrainian Acad.\ Sci. (1980) (in Russian)

\bibitem{Sandberg:2005}
Sandberg, S.: Homing and synchronizing sequences. In: Broy, M. et~al. (eds.), Model-Based Testing of Reactive Systems. Lect.\ Notes
Comput.\ Sci., vol.\,3472, pp.\,5--33. Springer (2005)

\bibitem{Trahtman:2009}
Trahtman, A.: The {R}oad {C}oloring {P}roblem. Israel J. Math. 172(1), 51--60 (2009)

\bibitem{Trahtman:2011}
Trahtman, A.N.: Modifying the upper bound on the length of minimal synchronizing word. In: Owe O., Steffen M., Telle J.A. (eds.),
Fundamentals of Computation Theory, Lect.\ Notes Comput.\ Sci, vol.\,6914, pp.\,173--180. Springer (2011)

\bibitem{Volkov:2008}
Volkov, M.V.: Synchronizing automata and the \v{C}ern\'{y} conjecture. In: Mart\'\i{}n-Vide,~C., Otto, F., Fernau, H. (eds.), Languages and
Automata: Theory and Applications. Lect.\ Notes Comput.\ Sci., vol.\,5196, pp.\,11--27. Springer (2008)
\end{thebibliography}
\end{document}